\providecommand{\customgenericname}{} 
\newtheorem*{theorem*}{Theorem} 
\newtheorem{proposition}{Proposition}[section]
\newtheorem*{proposition*}{Proposition}
\newtheorem*{corollary*}{Corollary}
\newtheorem*{lemma*}{Lemma*}
\newtheorem*{remark*}{Remark}
\newtheorem*{definition*}{Definition}
\newtheorem*{conjecture*}{Conjecture}
\newtheorem*{question*}{Question}
\title{On the link between the Maxwell and linearised Einstein equations on Schwarzschild}
\author{Thomas Johnson}
\affil{Imperial College London}
\newcommand{\reals}{\mathbb{R}}
\newcommand{\exd}{\textnormal{d}}
\newcommand{\pt}{\partial_t}
\newcommand{\mcals}{\mathcal{S}}
\newcommand{\slg}{\textsl{g}}
\newcommand{\schst}{\left(\mcals, \textsl{g}_M\right)}
\newcommand{\smsymtwocovSch}{\Gamma(S^2T^*\mcals)}
\newcommand{\Boxgsch}{\Box_{\textsl{g}_M}}
\newcommand{\Riemgsch}{\textnormal{Riem}_{\slg_M}}
\newcommand{\divergsch}{\textnormal{div}_{\slg_M}}
\newcommand{\trgsch}{\textnormal{tr}_{\slg_M}}
\newcommand{\stkout}[1]{\ifmmode\text{\sout{\ensuremath{#1}}}\else\sout{#1}\fi}
\newcommand{\qxi}{\widetilde{\xi}}
\newcommand{\twosphere}{\mathbb{S}^2}
\newcommand{\qg}{\widetilde{g}_M}
\newcommand{\qj}{\widetilde{j}}
\newcommand{\sj}{\slashed{j}}
\newcommand{\omegab}{\underline{\omega}}
\newcommand{\qtr}{\textnormal{tr}_{\qg}}
\newcommand{\qhd}{\widetilde{\star}}
\newcommand{\qexd}{\widetilde{\textnormal{d}}}
\newcommand{\qn}{\widetilde{\nabla}}
\newcommand{\qdiv}{\widetilde{\delta}}
\newcommand{\qbox}{\widetilde{\Box}}
\newcommand{\str}{\slashed{\textnormal{tr}}}
\newcommand{\astrosunhat}{\widehat{{\astrosun}}}
\newcommand{\even}[1]{{#1}_{\textnormal{e}}}
\newcommand{\odd}[1]{{#1}_{\textnormal{o}}}
\newcommand{\sdiv}{\slashed{\textnormal{div}}}
\newcommand{\sn}{\slashed{\nabla}}
\newcommand{\sdso}{\slashed{\mathcal{D}}_1^{\star}}
\newcommand{\sdst}{\slashed{\mathcal{D}}_2^{\star}}
\newcommand{\sdo}{\slashed{\mathcal{D}}_1}
\newcommand{\sdt}{\slashed{\mathcal{D}}_2}
\newcommand{\slap}{\slashed{\Delta}}
\newcommand{\zslapinv}[1]{\slap^{-{#1}}_{\mathfrak{Z}}}
\newcommand{\qA}{\widetilde{A}}
\newcommand{\sA}{\slashed{A}}
\newcommand{\qZ}{\widetilde{Z}}
\newcommand{\qP}{\widetilde{P}}
\newcommand{\qtau}{\widetilde{\tau}}
\newcommand{\qeta}{\widetilde{\eta}}
\newcommand{\qp}{\widetilde{p}}
\DeclareDocumentCommand\norm{m m o o o} {{\mathbb{#1}}_{\IfNoValueF{#4}{\mathsmaller{#4}}}^{\IfNoValueF{#3}{#3}}[#2]{\IfNoValueF{#5}{({#5})}}}
\newcommand{\qgamma}{\widetilde{\gamma}}
\newcommand{\mgamma}{\stkout{\gamma}}
\newcommand{\sgamma}{\slashed{\gamma}}
\newcommand{\qastrosunhat}{\widehat{\widetilde{\astrosun}}}
\newcommand{\opmu}{(1+\mu)}
\newcommand{\qm}{\mcalq}
\newcommand{\mcalq}{\mathcal{Q}}
\newcommand{\qhatgamma}{\widehat{\widetilde{\gamma}}}
\newcommand{\qtrgamma}{\qtr\qgamma}
\newcommand{\shatgamma}{\widehat{\slashed{\gamma}}}
\newcommand{\strgamma}{\str\sgamma}
\newcommand{\qzeta}{\widetilde{\zeta}}
\begin{document}

\maketitle

\begin{abstract}
In this short note we shall demonstrate that given a smooth solution $\gamma$ to the linearised Einstein equations on Schwarzschild which is supported on the $l\geq 2$ spherical harmonics and expressed relative to a transverse and traceless gauge then one can construct from it a smooth solution to the sourced Maxwell equations expressed relative to a generalised Lorentz gauge. Here the Maxwell current is constructed from those gauge-invariant combinations of the components of $\gamma$ which are determined by solutions to the Regge--Wheeler and Zerilli equations. The result thus provides an elegant link between the spin 1 and spin 2 equations on Schwarzschild.
\end{abstract}

\tableofcontents

\section{Introduction and remarks}

The remarkable similarity (see e.g. \cite{Waldbook}) between the linearised Einstein equations in a transverse traceless gauge and the Maxwell equations in a Lorentz gauge is a celebrated feature of the theory of linearised gravity (see e.g. \cite{Keinstein}). The purpose of this note is to give a more precise version of this correspondence on the Schwarzschild spacetime. That is to say, we will show that one can in fact construct a solution to the latter from a solution to the former.

A rough version of this result will be given in section \ref{Roughversionoftheresult}. Beforehand however we discuss some further remarks regarding particular analytical consequences of the result as well as consequences relating to the study of linearised gravity on Schwarzschild. Said remarks are discussed in sections \ref{FurtherremarksI:onmovingfromalgebratoanalysis} and \ref{FurtherremarksII:ontherelationtothelinearstabilityofSchwarzschildinaharmonicgauge}.

\subsection{Rough version of the result}\label{Roughversionoftheresult}

Let $\schst$ denote the Schwarzschild spacetime of mass $M>0$. Then\footnote{Here $\smsymtwocovSch$ denotes the space of smooth, symmetric, 2-covariant tensors on $\mcals$.} $\gamma\in\smsymtwocovSch$ is said (cf. \cite{Waldbook}) to be a smooth solution to the linearised Einstein equations in the transverse traceless gauge on $\schst$ if the following system of equations hold true:
\begin{equation}\label{leett}
\begin{aligned}
\Boxgsch\gamma-2\Riemgsch\cdot\gamma^{\sharp\sharp}&=0,\\
\divergsch\gamma&=0,\\
\trgsch\gamma&=0.
\end{aligned}
\end{equation}
Here $\Boxgsch, \divergsch$ and $\trgsch$ are respectively the wave, divergence and trace operator associated to $\slg_M$ whilst $\Riemgsch$ is the Riemann tensor of $\slg_M$ and we have defined the contraction $(\Riemgsch\cdot\gamma^{\sharp\sharp})_{\alpha\beta}:=(\Riemgsch)_{\gamma\alpha\beta\delta}\gamma^{\gamma\delta}$ with indices raised as standard by $\slg_M$. Then the result is the following:

\begin{proposition}\label{propintro}
Let $\gamma$ be as above. Assume moreover that $\gamma$ is supported on the $l\geq 2$ spherical harmonics. Then there exist smooth 1-forms $A=A(\gamma)$ and $j=j(\gamma)$ along with a smooth function $L=L(\gamma)$ such that
\begin{align*}
\Boxgsch A&=-j-\exd L,\\
\divergsch A&=-L,\\
\divergsch j&=0.
\end{align*}
Moreover the quantities $j$ and $L$ are decoupled from $A$ in that they are determined by two scalars $\Phi=\Phi(\gamma)$ and $\Psi=\Psi(\gamma)$ each of which respectively satisfy the (decoupled) Regge--Wheeler and Zerilli equations.

In particular, $A$ is a smooth solution to the sourced Maxwell equations when expressed in a generalised Lorentz gauge.
\end{proposition}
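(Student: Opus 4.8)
The plan is to exploit the warped-product structure $\mcals = Q\times\twosphere$ of the Schwarzschild manifold and to reduce both the gravitational and the electromagnetic problems to scalar equations on the two-dimensional quotient $Q$. First I would decompose $\gamma$ according to this product, writing its components as a symmetric $2$-tensor on $Q$, a $Q$-valued one-form on $\twosphere$, and a symmetric $2$-tensor on $\twosphere$; each sphere-tensorial factor is then split by the Hodge decomposition on $\twosphere$ into its closed (even/polar) and co-closed (odd/axial) parts and expanded in spherical harmonics. Since $\gamma$ is supported on $l\geq 2$ and is transverse and traceless, the even and odd sectors decouple and the residual gauge freedom can be used to isolate the standard gauge-invariant master scalars: the Regge--Wheeler function $\Phi=\Phi(\gamma)$ in the odd sector and the Zerilli function $\Psi=\Psi(\gamma)$ in the even sector. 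Feeding the transverse-traceless system \eqref{leett} into these definitions shows that $\Phi$ and $\Psi$ satisfy the decoupled Regge--Wheeler and Zerilli equations respectively; this is the content I would establish (or quote) before touching Maxwell.

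Next I would recast the target system in geometric terms. Because Schwarzschild is Ricci-flat, commuting $\divergsch$ past $\Boxgsch$ on a one-form produces no curvature term, so the first two equations are equivalent to the single sourced Maxwell equation $\divergsch(\exd A)=-j$ for the field $\exd A$, with $\divergsch A=-L$ playing the role of the generalised Lorentz gauge; moreover, taking $\divergsch$ of the first equation and using $\divergsch A=-L$ forces $\divergsch j=0$, so the third equation is a consistency condition implied once the first two hold and $\exd A$ is closed. I would then define $j$ and $L$ explicitly as differential expressions (carrying the appropriate powers of $r$ and potential terms) in $\Phi$ and $\Psi$ on $Q$, chosen precisely so that the conservation law $\divergsch j=0$ collapses, term by term, onto the Regge--Wheeler and Zerilli equations already verified for $\Phi$ and $\Psi$. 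This renders $j$ and $L$ manifestly decoupled from $A$ and determined solely by the gravitational data.

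Finally I would construct $A$ itself. Decomposing $A=\qA+\sA$ into its $Q$-part and sphere-part and Hodge-splitting $\sA$, the generalised-Lorentz Maxwell system separates, sector by sector and harmonic by harmonic, into decoupled scalar wave equations of Regge--Wheeler type on $Q$ whose sources are the expressions built from $\Phi$ and $\Psi$. Solving these scalar equations and reassembling yields the desired smooth one-form $A$, the gauge identity $\divergsch A=-L$ being propagated consistently precisely because $\divergsch j=0$, while smoothness and support on $l\geq 2$ are inherited from those of $\gamma$. The main obstacle I anticipate is the explicit matching in the middle step: pinning down the exact differential operators that send the spin-$2$ masters $\Phi,\Psi$ to a \emph{conserved} current $j$ and a compatible $L$, and checking that $\divergsch j=0$ reduces \emph{identically} to the Regge--Wheeler and Zerilli equations. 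Aligning the $r$-weights, the $l$-dependent constants, and the potential terms across the spin-$2$ and spin-$1$ structures is the delicate computation on which the whole correspondence rests.
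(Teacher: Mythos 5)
Your setup (the $2+2$ decomposition of $\gamma$, the Hodge/spherical-harmonic splitting into even and odd sectors, and the extraction of the gauge-invariant Regge--Wheeler and Zerilli scalars $\Phi,\Psi$) matches the paper's framework, and your observation that the first two equations plus Ricci-flatness force $\divergsch j=0$ is correct. But your final step inverts the logic of the proposition in a way that creates a genuine gap. You propose to \emph{prescribe} $j$ and $L$ as expressions in $\Phi,\Psi$ and then \emph{solve} the sourced wave equation (with gauge-compatible data) to manufacture $A$. The resulting $A$ bears no relation to $\gamma$ beyond the sources: it is not ``constructed (explicitly) from $\gamma$'' as the proposition requires, and indeed under your reading the statement is nearly vacuous, since $A=j=L=0$ also satisfies the three displayed equations. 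The entire content of the result is that $A$ is an explicit differential expression in the \emph{gauge-dependent} components of $\gamma$ --- in the paper, $A=\qzeta+\sdso(\even{\shatgamma},\odd{\shatgamma})$ with $\qzeta=\even{\mgamma}-r^2\qn\big(r^{-2}\even{\shatgamma}\big)$ --- and the sourced Maxwell system in Lorentz gauge is then \emph{derived as an identity} from the projected linearised Einstein equations, with $j$ and $L$ emerging as the terms built from the gauge-invariant part. This link between $A$ and $\gamma$ is precisely what the applications (the scalarisation of the appendix, the realisation of the Regge--Wheeler gauge) rest on, and solving an initial value problem cannot recover it.

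Relatedly, the step you flag as ``the delicate computation on which the whole correspondence rests'' --- choosing $j,L$ in terms of $\Phi,\Psi$ so that $\divergsch j=0$ collapses onto the Regge--Wheeler and Zerilli equations --- is not actually how the conservation law is obtained, and attempting it directly would be painful. The paper instead defines $j$ and $L$ from the gauge-invariant projected quantities $\qtau$, $\qeta$ and $\sigma$ (namely $j=\frac{2}{r}\qtau_{\qP}-\frac{1}{2}\qn\sigma-\frac{1}{r}\qn r\,\sigma-\frac{1}{2}\sdso(r\sigma,-4\qeta_{\qP})$ and $L=\frac{1}{2}\sigma$); the identity $\divergsch j=0$ then follows immediately from the known linearised-Bianchi-type identities
\begin{align*}
\qdiv\qtau+\tfrac{1}{2}\qn\sigma=\qtr\qtau=\qdiv\qeta=0,
\end{align*}
which hold for these quantities on any solution by gauge invariance. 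The expression of $\qtau,\qeta,\sigma$ in terms of $\Phi$ and $\Psi$ is a separate, quotable fact (Proposition \ref{propRWZ}), so no term-by-term matching against the Regge--Wheeler and Zerilli equations is ever needed. To repair your argument you would need to (i) define $A$ directly from the even and odd gauge-dependent components of $\gamma$, and (ii) verify the three equations as consequences of the projected system \eqref{leett}, rather than solving for $A$ and deferring the conservation computation.
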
 
We emphasize that the notation $A=A(\gamma)$ etc. is to denote that the relevant quantities are constructed (explicitly) from $\gamma$. See section \ref{Proofofresult} for the precise relations. See there also for what it means for a tensor to be supported on the $l\geq2$ spherical harmonics. This restriction on the angular frequencies is to avoid complications provided by linearised Kerr solutions to \eqref{leett} (see e.g. \cite{Jlinstabschwarz}). We also note that it is necessary to introduce the gauge function $L$ so that the continuity equation holds for $j$.


\subsection{Further remarks I: From algebra to analysis}\label{FurtherremarksI:onmovingfromalgebratoanalysis}

It is of the authors opinion that the algebraic content of the above proposition is sufficiently elegant to be worthy of record. However one can exploit it directly to obtain quantitative \emph{decay} estimates on sufficiently regular solutions to the tensorial system \eqref{leett}\footnote{We note that \eqref{leett} admits a well-posed Cauchy problem with the gauge conditions propagating under evolution by the wave part of the equation.}. The reason why one might wish to establish such estimates is discussed in the next section.

Indeed a consequence of the result is that one can express the solution $\gamma$ in terms of five scalars which collectively satisfy a hierarchical system of scalar wave equations. This system can then be analysed (hierarchically) using the techniques developed by Dafermos--Rodnianski in \cite{DRlecturenotes}, \cite{DRredshift} and \cite{DRrpmethod} for studying the scalar wave equation on Schwarzschild. This ``scalarisation'' of \eqref{leett} is presented in the appendix to this note and arises from the following considerations.

The bottom level of the hierarchy are the previously mentioned decoupled scalars $\Phi$ and $\Psi$. The extraction of these quantities is motivated by their gauge-invariance (\cite{RWrweqn}, \cite{Zzeqn} and \cite{Moncrieflinstab}). Here gauge refers to the infinitesimal diffeomorphism symmetry of linearised gravity. The second level of the hierarchy then consists of two scalars determined from the gauge-invariant part of the Maxwell potential $A(\gamma)$. Here gauge now refers to the $U(1)$ gauge symmetry of electromagnetism. It is well known (see e.g. \cite{COSschwarz}) that in the presence of a Maxwell current such quantities decouple into the inhomogeneous wave equation described by the Fackerell--Ipser equation with inhomogeneity determined from the current.  The final level of the hierarchy then consists of one scalar determined now from the (remaining) gauge-dependent part of the Maxwell potential. That these quantities satisfy inhomogeneous (but decoupled) wave equations follows upon closer inspection of the Maxwell equations if one allows the invariant quantities to act now as source terms. 

It is important to note that the techniques of \cite{DRlecturenotes}-\cite{DRrpmethod} cannot be used to study the asymptotics of solutions to \eqref{leett} directly. Indeed a geometric treatment using covariant derivatives as ``multipliers'' would lead at best to non-coercive energies due to the Lorentzian character of the associated inner product. Moreover the potential term is not positive-definite. On the other hand expressing \eqref{leett} relative to a frame leads to a coupled system of \emph{linear} equations and the global analysis of such systems is not particularly susceptible to energy methods. We thus take the point of view that the main new difficulty in estimating solutions to \eqref{leett} is in gaining a sufficient understanding of the underlying algebraic structure so as to make current techniques applicable. The subsequent analysis is then left to the interested reader.

\subsection{Further remarks II: On the linear stability of Schwarzschild in a harmonic gauge}\label{FurtherremarksII:ontherelationtothelinearstabilityofSchwarzschildinaharmonicgauge}

The reason one might be interested in obtaining the decay estimates discussed in the previous section is due to the fact that \eqref{leett} arises as the linearisation around Schwarzschild of the Einstein vacuum equations when expressed in a harmonic gauge (see e.g. \cite{Jlinstabschwarz} -- note that the traceless condition comes from imposing a unimodular condition in the nonlinear theory). A statement of non-linear stability for Schwarzschild in this gauge thus relies fundamentally, at least with current techniques, on establishing sufficiently good global estimates in the linear theory. It is in fact in this context that the author originally discovered, during the course of his PhD, the main result of this note. Indeed the 1-form $A$ was isolated precisely because it determines the ``gauge-dependent'' part of $\gamma$, it being well known from the pioneering work of Regge--Wheeler \cite{RWrweqn} and later Zerilli \cite{Zzeqn} that the gauge-invariant part is determined by solutions to the Regge--Wheeler and Zerilli equations (corresponding in Proposition \ref{propintro} to the scalars $\Phi$ and $\Psi$). Observe then that the restriction on the angular frequencies becomes entirely natural in view of the fact that the $l=0,1$ modes are described completely by linearised Kerr solutions and infinitesimal changes in the centre of mass and linear momentum.

Unfortunately however the estimates one obtains by proceeding as described in the previous section do not appear to fall under the remit of being ``sufficiently good''. In particular the decay rate obtained for a suitable \emph{energy} on the radiation field $r\gamma$ with $r$ an area radius function is not integrable in ``time'' in the wave zone. Moreover, and perhaps more fundamentally, this energy loses derivatives. The reason for both these losses when compared with the scalar wave equation is that one is estimating a hierarchical system of \emph{linear} wave equations with the source terms  exhibiting ``bad\footnote{A model (near the photon sphere) two tier hierarchy would be $\Boxgsch\phi=\pt\psi$ with $\Boxgsch\psi=0$, $t$ a suitable time function, and the regularity one wants to propagate is $\phi$ being at the level of $\psi$.}'' behaviour near the photon sphere and null infinity (both in terms of $r$ and the derivative appearing on the source. The structure of the hierarchy in question also does not appear to allow these deficiencies to be overcome via means of some re-normalisation procedure.\footnote{Remark that the improved results of \cite{AAGbetterdecay} for decay of the scalar wave equation are needed to even ascend the hierarchy so as to obtain the weak decay estimates for $\gamma$ in question.}

These observations suggest that harmonic gauge is perhaps not the optimal gauge in which to consider the nonlinear stability problem for Schwarzschild. This in turn motivated our search for a \emph{generalised} harmonic gauge and indeed in \cite{Jlinstabschwarz} we were able to identify such a gauge in which both energy decay and energy boundedness holds at the desired level for the radiation field of sufficiently regular solutions. As it happens this choice of gauge arose as a modification of the generalised harmonic gauge also employed by the author in \cite{Johnsonlinstabschwarzold}, the modification allowing estimates on the radiation field to be obtained, and the identification of this latter gauge can be directly related to Proposition \ref{propintro}. Indeed, in this gauge the associated 1-form $A$ actually satisfies the source free Maxwell equations and can in fact be eliminated globally by exploiting residual gauge freedom so as to set the initial data of $A$ to be trivial.\footnote{Remark that the vanishing of $A$ actually defines the Regge--Wheeler gauge of \cite{RWrweqn}. A particular insight of \cite{Johnsonlinstabschwarzold} was thus the realisation that the Regge--Wheeler gauge can in fact be realised as a generalised harmonic gauge.}

Of course it is also possible that the ``scalarisation'' outlined in the previous section is not the most efficient way to analyse solutions of \eqref{leett}. 
Indeed, we note the recent paper \cite{HHVlinstabkerr} which obtains in particular, via microlocal techniques, decay estimates for solutions to the ``larger'' system
\begin{equation}\label{poo}
\begin{aligned}
\Boxgsch\gamma-2\Riemgsch\cdot\gamma^{\sharp\sharp}&=0,\\
\divergsch\gamma-\frac{1}{2}\exd \trgsch\gamma&=0
\end{aligned}
\end{equation}
without resorting to a full ``algebraic'' decomposition.\footnote{Note however that the estimates still lose derivatives.} Yet it would appear that the algebraic insights of section \ref{FurtherremarksI:onmovingfromalgebratoanalysis} could be employed even in this approach. To see this, we first observe that an important ingredient in \cite{HHVlinstabkerr} is obtaining a statement of mode stability for pure gauge solutions to \eqref{poo}. These are solutions of the form $\gamma=\mathcal{L}_V\slg_M$ for a vector field $V$ solving $\Boxgsch V=0$. However this statement is relatively easy to establish for the system \eqref{leett}, which differs from \eqref{poo} only by a residual gauge transformation, by exploiting the insights of the previous section. Indeed we first observe that pure gauge solutions to the system \eqref{leett} must in addition satisfy $\divergsch V=0$ and are thus determined by solutions to the source free Maxwell equations expressed relative to a Lorentz gauge. Mode stability then follows from applying known results for the scalar wave equation on Schwarzschild combined with the argument of the previous section which ``scalarises'' the now source free Maxwell system into a two tier hierarchy of scalar waves.


\subsection{Acknowledgements}

The author thanks Gustav Holzegel for many helpful comments regarding the manuscript. The author in addition gratefully acknowledges support through ERC Consolidator grant 772249. 

\section{Proof of result}\label{Proofofresult}

We now prove Proposition \ref{propintro}. We shall focus in the proof on the derivation only on the Schwarzschild exterior.

 Now the proof will require decomposing the system \eqref{leett} relative to the spherical topology of Schwarzschild as this allows the structure of the tensorial system to be better understood. Whilst there are many ways of doing this let us fix for definiteness the Schwarzschild-star foliation of the exterior by 2-spheres as in \cite{Jlinstabschwarz}. Then in the exterior patch
\begin{align*}
\mcals_e\cong\reals^2_{t^*, r}\times \twosphere
\end{align*}
which then allows the operation of projection of spacetime tensors, objects, operators etc. onto their parts ``tangent'' and ``non-tangent'' to the spheres. This formalism is developed concretely in section 3 and appendix A of \cite{Jlinstabschwarz} and we shall employ in freely in this note for the sake of brevity. In particular, we treat these sections of \cite{Jlinstabschwarz} as a companion piece to this note.\footnote{It is there also that one will find what it means for the tensor $\gamma$ to be supported on the $l\geq 2$ spherical harmonics.}

With this in mind, to state the result precisely we first must project $\gamma$ onto the mixture of $\qm$-tensors and $S$-tensors $\qhatgamma,\qtrgamma, \mgamma,\shatgamma$ and $\strgamma$ as defined in \cite{Jlinstabschwarz}. We then introduce the derived quantities
\begin{align*}
\widetilde{\tau}&=\qhatgamma+\frac{1}{2}\qg\qtrgamma-\qn\astrosun\qzeta\\
\qeta&=\odd{\mgamma}-r^2\qexd\left(r^{-2}\odd{\shatgamma}\right)\\
\sigma&=\strgamma-2\slap\even{\shatgamma}-\frac{4}{r}\qzeta_{\qP}
\end{align*}
with 
\begin{align*}
\qzeta=\even{\mgamma}-r^2\qn\left(r^{-2}\even{\shatgamma}\right).
\end{align*}
Here $\odd{\mgamma}, \even{\shatgamma}$ and $\odd{\shatgamma}$ and the components appearing in the Hodge-type decomposition of $\mgamma$ and $\shatgamma$ of section 3.3 in \cite{Jlinstabschwarz}. They are unique and well defined since $\gamma$ is supported on the $l\geq 2$ spherical harmonics.

\begin{proposition}\label{mainprop}
Let $\gamma$ be as in Proposition \ref{propintro} and define the smooth 1-forms
\begin{align*}
A&=\qzeta+\sdso(\even{\shatgamma}, \odd{\shatgamma}),\\
j&=\frac{2}{r}\qtau_{\qP}-\frac{1}{2}\qn\sigma-\frac{1}{r}\qn r\,\sigma-\frac{1}{2}\sdso(r\sigma, -4\qeta_{\qP})
\end{align*}
along with the smooth function
\begin{align*}
L=\frac{1}{2}\sigma.
\end{align*}
Then the following system of equations hold true:
\begin{align*}
\Boxgsch A&=-j-\exd L,\\
\divergsch A&=-L,\\
\divergsch j&=0.
\end{align*}
\end{proposition}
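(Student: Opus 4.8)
The plan is to reduce the three tensorial identities to a coupled system of scalar and one-form equations on the quotient, exploiting the product structure $\mcals_e\cong\reals^2_{t^*,r}\times\twosphere$ and the $\qm$--$S$ projection formalism of \cite{Jlinstabschwarz}. First I would decompose the full system \eqref{leett} into its projected components, separating the even (polar) and odd (axial) parity sectors, which decouple. The first line of \eqref{leett}, namely $\Boxgsch\gamma=2\Riemgsch\cdot\gamma^{\sharp\sharp}$, becomes the working wave equations for the projected pieces $\qhatgamma,\qtrgamma,\mgamma,\shatgamma$ and $\strgamma$, while the trace-free and divergence-free gauge conditions furnish algebraic and first-order constraints among them: the odd part of $\divergsch\gamma=0$ constrains $\odd{\mgamma}$ and $\odd{\shatgamma}$, and its even part together with $\trgsch\gamma=0$ relates $\even{\mgamma},\even{\shatgamma},\qtrgamma$ and $\strgamma$. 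I would record these constraints at the outset, since they are what ultimately collapse the apparently complicated expressions for $A$, $j$ and $L$ into closed form.

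With the constraints in hand, the gauge identity $\divergsch A=-L$ is the most tractable. Writing $A=\qzeta+\sdso(\even{\shatgamma},\odd{\shatgamma})$, I would expand $\divergsch$ into its $\qm$- and $S$-parts, with the $S$-divergence of the Hodge term $\sdso(\even{\shatgamma},\odd{\shatgamma})$ producing the expected $\slap$-type operator on the $l\geq2$ harmonics (where $\sdso$ is injective and the adjoint relations are clean). The even divergence constraint then eliminates every term except the combination defining $\sigma$, yielding $\divergsch A=-\tfrac12\sigma=-L$. This computation simultaneously confirms that $\sigma$ is the correct even-parity invariant and fixes the normalisation of $L$.

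The wave equation $\Boxgsch A=-j-\exd L$ is where the real work lies, and I expect it to be the main obstacle. The strategy is to commute $\Boxgsch$ through the defining expression for $A$, tracking the commutators $[\Boxgsch,\qn]$, $[\Boxgsch,\sdso]$ and $[\Boxgsch,r\cdot]$, and then to eliminate the resulting $\Boxgsch\gamma$-terms using the projected wave equations, which trade them for the curvature contraction $\Riemgsch\cdot\gamma^{\sharp\sharp}$ and lower-order pieces. The essential point is that, by the algebraically special structure of Schwarzschild, this curvature contraction decomposes into precisely the combinations that reassemble the gauge-invariant quantities $\qtau_{\qP}$, $\sigma$ and $\qeta_{\qP}$ appearing in $j$. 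I would treat the two parities separately: in the odd sector the controlling invariant is $\qeta$ (Regge--Wheeler) and in the even sector $\sigma$ (Zerilli), so that after repeated use of the constraints the right-hand side organises exactly into $-j-\exd L$ with $j$ and $L$ as defined. The genuine difficulty is bookkeeping — managing the $r$-weights and commutator remainders so that no spurious terms survive — and verifying en route that the scalars extracted from $\qeta$ and $\sigma$ solve the Regge--Wheeler and Zerilli equations, which is what guarantees that $j$ and $L$ are decoupled from $A$.

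Finally, the continuity equation $\divergsch j=0$ I would deduce from the first two equations rather than verify directly. On the Ricci-flat Schwarzschild background one has $\divergsch(\exd A)=\Boxgsch A-\exd(\divergsch A)$ up to Ricci terms that vanish; substituting $\Boxgsch A=-j-\exd L$ and $\divergsch A=-L$ causes the gauge terms to cancel and leaves $\divergsch(\exd A)=-j$, so that $A$ solves the sourced Maxwell equations in a generalised Lorentz gauge, which is the ``in particular'' assertion of Proposition \ref{propintro}. Taking one further divergence, $\divergsch j=-\divergsch\divergsch(\exd A)=0$, since the double divergence of the antisymmetric two-form $\exd A$ vanishes identically. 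This is exactly the structural reason the gauge function $L$ had to be introduced in the first place.
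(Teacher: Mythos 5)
Your proposal is correct, and for the first two identities it is essentially the paper's own argument: both project \eqref{leett} into the $\qm$--$S$ formalism of \cite{Jlinstabschwarz} (the projected wave equations and constraints being Proposition A.1 there) and reassemble these into the wave and gauge equations for $A$; the paper organises the computation through intermediate wave equations for $\qzeta$, $\even{\shatgamma}$ and $\odd{\shatgamma}$ rather than by commuting $\Boxgsch$ through the definition of $A$, but this is the same calculation with different bookkeeping, and both accounts defer the detailed verification to the companion paper. Where you genuinely diverge is the continuity equation. The paper proves $\divergsch j=0$ \emph{directly}: since $\qtau$, $\qeta$ and $\sigma$ are gauge-invariant, the identities (57)--(59) of \cite{Jlinstabschwarz}, namely $\qdiv\qtau+\frac{1}{2}\qn\sigma=\qtr\qtau=\qdiv\qeta=0$, hold for them when derived from \emph{any} solution of the linearised Einstein equations, and $\divergsch j=0$ then follows by inspection of the definition of $j$. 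You instead deduce it from the first two equations, using $\divergsch(\exd A)=\Boxgsch A-\exd(\divergsch A)$ on the Ricci-flat background together with the identical vanishing of the double divergence of the two-form $\exd A$. This is sound (the Ricci terms do vanish on Schwarzschild, and the double divergence vanishes by symmetry of the Ricci tensor) and arguably more economical: it needs no input beyond the first two equations, and it exposes the structural reason the gauge function $L$ must be introduced at all. What the paper's route buys in exchange is the stronger fact that $\divergsch j=0$ holds independently of the equations satisfied by $A$, because $j$ is assembled from gauge-invariant quantities obeying their own constraint identities; that observation is what connects the present proposition to Proposition \ref{propRWZ} and to the claim in Proposition \ref{propintro} that $j$ and $L$ decouple from $A$, information which your shortcut does not by itself recover (though, as you implicitly acknowledge, that decoupling is a separate statement from the one being proved here).
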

\begin{proof}
It follows from Proposition A.1 of \cite{Jlinstabschwarz} that the projected quantities associated to $\gamma$ satisfy the system
	\begin{align*}
	\qbox\qhatgamma+\slap\qhatgamma+\frac{2}{r}\qn_{\qP}\qhatgamma-\frac{2}{r}\qn r\qastrosunhat\sdiv\mgamma-\frac{2}{r^2}\qn r\qastrosunhat\qhatgamma_{\qP}-\frac{2\mu}{r^2}\qhatgamma-\frac{1}{r^2}\qn r\qastrosunhat\qn r\big(\qtrgamma-\strgamma\big)&=0,\\
	\qbox\qtrgamma+\slap\qtrgamma+\frac{2}{r}\qn_{\qP}\qtrgamma-\frac{4}{r}\sdiv\mgamma_{\qP}+\frac{4}{r^2}\qgamma_{\qP\qP}-\frac{2}{r^2}(1-2\mu)\big(\qtrgamma-\strgamma\big)&=0,\\
	\qbox\mgamma+\slap\mgamma+\frac{2}{r}\qn_{\qP}\mgamma+\frac{2}{r}\sn\astrosun\qgamma_{\qP}-\frac{2}{r}\qn r\astrosun\sdiv\sgamma-\frac{3}{r^2}\qn r\astrosun\mgamma_{\qP}-\frac{1}{r^2}(1-2\mu)\mgamma&=0,\\
	\qbox\shatgamma+\slap\shatgamma+\frac{2}{r}\qn_{\qP}\shatgamma+\frac{2}{r}\sn\astrosunhat\mgamma_{\qP}-\frac{2}{r^2}\opmu\shatgamma&=0,\\
	\qbox\strgamma+\slap\strgamma+\frac{2}{r}\qn_{\qP}\strgamma+\frac{4}{r}\sdiv\mgamma_{\qP}+\frac{4}{r^2}\qhatgamma_{\qP\qP}+\frac{2}{r^2}(1-2\mu)\big(\qtrgamma-\strgamma\big)&=0,\\
	-\qdiv\qhatgamma+\frac{1}{2}\qn\qtrgamma+\sdiv\mgamma+\frac{2}{r}\qhatgamma_{\qP}+\frac{1}{r}\qn r\big(\qtrgamma-\strgamma\big)&=0,\\
	-\qdiv\mgamma+\sdiv\shatgamma+\frac{1}{2}\sn\strgamma+\frac{3}{r}\mgamma_{\qP}&=0,\\
	\qtrgamma+\strgamma&=0.
	\end{align*}
From this we then derive
\begin{align*}
\qbox\qzeta+\slap\qzeta+\frac{2}{r}\qn_{\qP}\qzeta-\frac{2}{r^2}\qn r\text{ }\qzeta_{\qp}&=\frac{2}{r}\qn r\text{ }  \slap\even{\shatgamma}-\frac{2}{r}\qtau_{\qp}+\frac{1}{r}\qexd r\text{ }\sigma,\\
\qbox\even{\shatgamma}+\slap\even{\shatgamma}&=-\frac{2}{r}\qzeta_{\qp},\\
\qbox\odd{\shatgamma}+\slap\odd{\shatgamma}&=-\frac{2}{r}\qeta_{\qP},\\
\qdiv\qzeta-\slap\even{\shatgamma}-\frac{2}{r}\qzeta_P&=L.
\end{align*}
Therefore
\begin{align}
\qbox\qA+\slap\qA+\frac{2}{r}\qn_{\qP}\qA-\frac{2}{r^2}\qn r\text{ }\qA_{\qP}-\frac{2}{r}\qn r\text{ }  \sdiv{\sA}&=-\qj-\qn L,\label{cock1}\\
\qbox{\sA}+\slap{\sA}+\frac{2}{r}\qn_{\qP}\sA+\frac{2}{r}\sn\qA_{\qP}-\frac{1}{r^2}\qn r\cdot\qn r\sA&=-{\sj}-\sn L\label{cock2},\\
\qdiv\qA-\sdiv{\sA}-\frac{2}{r}\qA_{\qP}&=L\label{cock3}.
\end{align}
This yields the first two equations on $A$. Finally for the divergence equation on $j$ we observe from (57)-(59) on page 49 of \cite{Jlinstabschwarz} that the quantities $\qtau, \qeta$ and $\sigma$ must satisfy
\begin{align*}
\qdiv\qtau+\frac{1}{2}\qn\sigma=\qtr\qtau=\qdiv\qeta=0
\end{align*}
from which the proposition follows.
\end{proof}

Here in the last step of the proof we have invoked the fact that $\qtau, \qeta$ and $\sigma$ are actually gauge-invariant quantities. Therefore the identities (57)-(59) must hold on these quantities when derived from \emph{any} solution to the linearised Einstein equations on Schwarzschild and therefore in particular $\gamma$.

Finally to complete the proof of the main result we must show that $j$ and $L$ can be expressed as claimed in Proposition \ref{propintro} in terms of two scalars satisfying the Regge--Wheeler and Zerilli equations respectively. Indeed, in view of the previous comment regarding gauge-invariance of $\qtau, \qeta$ and $\sigma$ this statement on $j$ and $L$ then follows immediately from section 6 of \cite{Jlinstabschwarz}. In particular we have:
\begin{proposition}\label{propRWZ}
Let $\gamma$ be as in Proposition \ref{mainprop}. Then the derived quantities $\qtau, \qeta$ and $\sigma$ can be expressed as
\begin{align*}
\qtau&=\qn\astrosunhat\qexd\Big(r\Psi\Big)+6\mu\qexd r\qastrosunhat\zslapinv{1}\qexd\Psi,\\
\qeta&=-\qhd\qexd\Big(r\Phi\Big),\\
\sigma&=-2r\slap\Psi+4\qn_{\qP}\Psi+12\mu r^{-1}(1-\mu)\zslapinv{1}\Psi
\end{align*}
where the smooth scalars $\Phi$ and $\Psi$ satisfy the Regge--Wheeler and Zerilli equations respectively:
\begin{align*}
\qbox\Phi+\slap\Phi=-\frac{6}{r^2}\frac{M}{r}\Phi
\end{align*}
and
\begin{align*}
\qbox\Psi+\slap\Psi=-\frac{6}{r^2}\frac{M}{r}\Psi+\frac{24}{r^5}\frac{M}{r}(r-3M){\slap^{-1}_{\mathfrak{Z}}}\Psi+\frac{72}{r^7}\frac{M}{r}\frac{M}{r}(r-2M)\zslapinv{2}\Psi.
\end{align*}
Here $\zslapinv{p}$ is in the inverse of the operator $\slap+\frac{2}{r^2}(1-\frac{3M}{r})$ applied $p$-times.
\end{proposition}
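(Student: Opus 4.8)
The plan is to exploit the gauge-invariance of $\qtau$, $\qeta$ and $\sigma$, which guarantees both that these quantities satisfy \emph{decoupled} equations and that the constraints $\qtr\qtau=0$, $\qdiv\qtau+\frac{1}{2}\qn\sigma=0$ and $\qdiv\qeta=0$ recorded at the end of the proof of Proposition~\ref{mainprop} hold. I would then split into the odd-parity sector, governed by $\qeta$, and the even-parity sector, governed by the coupled pair $(\qtau,\sigma)$, extract in each sector a single master scalar by inverting the relevant Hodge-type operators on the $l\geq 2$ harmonics, and finally substitute back into the evolution equations inherited by the projected quantities to identify the Regge--Wheeler and Zerilli equations.

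For the odd sector I would observe that $\qeta$ is a $1$-form on the two-dimensional Lorentzian quotient $Q$ obeying $\qdiv\qeta=0$. Since $Q$ is topologically trivial and $\qdiv\qhd\qexd=0$ automatically in two dimensions, there is a smooth scalar $\Phi$, unique on the $l\geq 2$ harmonics, with $\qeta=-\qhd\qexd(r\Phi)$; the factor $r$ and the sign are chosen so that the resulting scalar equation is in standard form. Feeding this into the evolution equation that $\qeta$ inherits from the projected system and commuting $\qhd\qexd$ through $\qbox$ and $\slap$ should then collapse the tensorial equation to the single Regge--Wheeler equation $\qbox\Phi+\slap\Phi=-\frac{6}{r^2}\frac{M}{r}\Phi$.

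The even sector carries the genuine difficulty. Here $\qtau$ is a symmetric trace-free $2$-tensor on $Q$ and $\sigma$ is a scalar on $Q$, coupled through $\qdiv\qtau+\frac{1}{2}\qn\sigma=0$. Since $\qtau$ is of even parity, the natural ansatz is $\qtau=\qn\astrosunhat\qexd(r\Psi)+(\text{a correction supported along }\qexd r)$ for a master scalar $\Psi$, and one then uses the momentum constraint together with the trace-free condition to solve simultaneously for $\Psi$ and for $\sigma$ in terms of $\Psi$. The correction term $6\mu\,\qexd r\,\qastrosunhat\zslapinv{1}\qexd\Psi$ and the inverse angular operators appearing in $\sigma$ arise precisely because disentangling a single scalar from the even-parity data forces one to invert the operator $\slap+\frac{2}{r^2}(1-\frac{3M}{r})$ on each harmonic; the $l\geq 2$ restriction is exactly what makes this operator invertible, so that $\zslapinv{1}$ and $\zslapinv{2}$ are well defined.

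The main obstacle will be pushing the even-sector algebra through to the Zerilli equation with its precise potential, since the two extra terms $\frac{24}{r^5}\frac{M}{r}(r-3M)\zslapinv{1}\Psi$ and $\frac{72}{r^7}\frac{M}{r}\frac{M}{r}(r-2M)\zslapinv{2}\Psi$ have no analogue in the Regge--Wheeler case and emerge only after repeatedly combining the wave equations for $\qhatgamma$, $\qtrgamma$, $\even{\shatgamma}$ and $\qzeta$ with the constraint and the commutators of $\qn$, $\slap$ and multiplication by $r$. To control this I would cross-check the claimed identities in the Regge--Wheeler gauge: gauge-invariance permits setting $\even{\shatgamma}=\odd{\shatgamma}=0$, after which the master scalars can be read off directly from the surviving components, fixing all normalisations and confirming the potential terms without executing the full tensorial computation.
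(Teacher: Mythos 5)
The paper's own ``proof'' of this proposition involves no computation at all: it observes that $\qtau$, $\qeta$ and $\sigma$ are gauge-invariant, so the identities and the Regge--Wheeler/Zerilli equations, already established in section 6 of \cite{Jlinstabschwarz} for these quantities derived from an \emph{arbitrary} solution of the linearised Einstein equations, apply verbatim to the transverse-traceless solution $\gamma$. Your opening appeal to gauge-invariance is the same reduction, but you then set out to re-derive that reference's content from scratch, and there the proposal has a genuine gap: the even-parity sector, which is the entire substance of the proposition (the formulae for $\qtau$ and $\sigma$, and the Zerilli equation with its $\zslapinv{1}$ and $\zslapinv{2}$ terms), is never derived. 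You reduce it to an ansatz whose coefficients are to be ``confirmed'' by a cross-check in the Regge--Wheeler gauge ``without executing the full tensorial computation''. That cross-check cannot carry the load. First, it presupposes that every $l\geq 2$ solution of the linearised Einstein equations is gauge-equivalent to one in Regge--Wheeler gauge --- an existence statement for the gauge vector field that you never address --- and your stated gauge conditions are incomplete anyway: $\even{\shatgamma}=\odd{\shatgamma}=0$ is only part of that gauge, since one must also arrange $\even{\mgamma}=0$ (equivalently $A=0$, which, as the paper notes in a footnote, is what defines the Regge--Wheeler gauge); without this, $\qzeta=\even{\mgamma}$ still contaminates every ``surviving component'' from which you propose to read off the master scalars. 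Second, even granting the gauge reduction, identifying $\Psi$ among the Regge--Wheeler-gauge components and showing that it satisfies the Zerilli equation with precisely the stated potential \emph{is} the full classical computation --- it is Zerilli's theorem, not a normalisation check. As written, the hardest step of the proposition is therefore assumed rather than proven.

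Your odd-parity argument is essentially sound in outline: on the topologically trivial quotient, $\qdiv\qeta=0$ does yield $\qeta=-\qhd\qexd f$ for some smooth $f$. But the claimed uniqueness is false: $f=r\Phi$ is determined only up to an additive function $c$ of the angular variables, i.e.\ $\Phi\mapsto\Phi+c/r$, and this ambiguity is \emph{not} annihilated by the Regge--Wheeler operator --- a one-line computation gives $\big(\qbox+\slap+\tfrac{6M}{r^3}\big)(c/r)=r^{-3}\big(2+r^2\slap\big)c$, which is nonzero for nonzero $c$ supported on the $l\geq 2$ harmonics. Hence at most one representative satisfies the Regge--Wheeler equation, and exhibiting it requires the evolution equation that $\qeta$ inherits from the projected system, which you invoke but never write down. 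The economical repair for the whole proposition is the paper's own argument: since $\qtau$, $\qeta$ and $\sigma$ are gauge-invariant, and since $\gamma$ --- although expressed in the transverse-traceless gauge \eqref{leett} --- is in particular a solution of the linearised Einstein equations, the expressions and the Regge--Wheeler and Zerilli equations proved for these quantities in section 6 of \cite{Jlinstabschwarz} hold for them here with no further work.
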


\appendix

\section{An effective scalarisation of \eqref{leett}}

In this appendix we show how one can effectively scalarise \eqref{leett} using Proposition \ref{mainprop}.

\begin{proposition}\label{propFIeqnandeqnlingrav}
	Let $\gamma$ be as in Proposition \ref{mainprop} and define the smooth functions\footnote{Remark that $\rho=0$ if $A=\exd \chi$.}
	\begin{align*}
	\rho=r^2\qhd\qexd\qA,
	\end{align*}
and
\begin{align*}
\omega=r^3\slap\even{\sA},\qquad\omegab=r^2\slap\odd{\sA}.
\end{align*}
	Then the following system of equations hold true\footnote{Remark that $r\Boxgsch(r^{-1}f)=\qbox f+\slap f$ on smooth functions.}:
	\begin{align*}
	\qbox\rho+\slap\rho=-\qhd\qexd\bigg(r^2\qj\bigg)
	\end{align*}
	and
	\begin{align*}
	\qbox\omega+\slap\omega=\frac{\mu}{r^2}\omega-2\qhd\qn\rho+2r^2\qj,\qquad\qbox\omegab+\slap\omegab=-2r\slap\qeta_P.
	\end{align*}
\end{proposition}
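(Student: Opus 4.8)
The plan is to scalarise the projected Maxwell system \eqref{cock1}--\eqref{cock3} obtained in the proof of Proposition \ref{mainprop} by applying the $Q$-curl operator $\qhd\qexd$ and the angular Laplacian $\slap$ (resolved into its even and odd parts) to the components of $A$, and then to commute these operators through the reduced wave operator $\qbox+\slap$. The organising principle is that $r^2\slap$ is the Laplacian of the \emph{unit} round sphere and is therefore independent of $r$; it commutes with $\qbox$, with $\slap$ and with $\qn$ on $Q$. Consequently every nontrivial lower-order term in the three target equations must originate either from the surplus radial weights in the definitions of $\rho,\omega,\omegab$ or from the coupling between the $Q$- and $S$-parts of $A$, and the computation reduces to tracking these two sources of terms.

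I would first treat the Fackerell--Ipser scalar $\rho=r^2\qhd\qexd\qA$. Applying $\qhd\qexd$ to \eqref{cock1} annihilates the pure-gauge contribution $\qn L=\exd L$, since $\qexd$ is nilpotent; this is precisely the $U(1)$ invariance recorded in the footnote ($\rho$ vanishes when $A=\exd\chi$). It then remains to commute $\qhd\qexd$ past $\qbox+\slap$ and the connection terms $\tfrac{2}{r}\qn_{\qP}\qA-\tfrac{2}{r^2}\qn r\,\qA_{\qP}$ using the Schwarzschild identities for $\qn\qexd r$ and $\qbox r$, and to absorb the $S$-coupling $\tfrac{2}{r}\qn r\,\sdiv\sA$, which is a $Q$-gradient and survives only through its radial weight. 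Weighting by $r^2$ and invoking the identity $r\Boxgsch(r^{-1}f)=\qbox f+\slap f$ then recasts the result in the stated form with source $-\qhd\qexd(r^2\qj)$.

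For the $S$-sector I would start from the even and odd potential equations $\qbox\even{\shatgamma}+\slap\even{\shatgamma}=-\tfrac{2}{r}\qzeta_{\qP}$ and $\qbox\odd{\shatgamma}+\slap\odd{\shatgamma}=-\tfrac{2}{r}\qeta_{\qP}$ already derived in the proof of Proposition \ref{mainprop}, recalling that $\even{\shatgamma}=\even{\sA}$, $\odd{\shatgamma}=\odd{\sA}$ and $\qzeta=\qA$. The odd case is clean: since $\omegab=r^2\slap\odd{\sA}$ is just the round Laplacian applied to $\odd{\sA}$, it commutes freely with $\qbox+\slap$, so applying it to the odd equation returns $\qbox\omegab+\slap\omegab=-2r\slap\qeta_{\qP}$ with no potential term, the source being the tier-one (Regge--Wheeler) invariant $\qeta$. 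The even scalar $\omega=r^3\slap\even{\sA}$ carries one extra factor of $r$ relative to the round Laplacian; commuting this factor through $\qbox+\slap$ produces the potential $\tfrac{\mu}{r^2}\omega$ from $\qbox r$, together with first-order terms from $\qn r\cdot\qn\even{\sA}$, while $r^2\slap$ applied to the coupling $-\tfrac{2}{r}\qA_{\qP}$ brings in the radial component of $\qA$. Re-expressing $\qA_{\qP}$ and these first-order terms through the definition of $\rho$ together with \eqref{cock1} and \eqref{cock3} then yields the cross term $-2\qhd\qn\rho$ and the current source $2r^2\qj$.

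The main obstacle will be the bookkeeping in the even sector. The odd equation and the $\rho$ equation are essentially clean, because there the gauge term and the even coupling each decouple; in the even case, by contrast, one must simultaneously carry the first-order contributions of the surplus radial weight and convert the angular expression $r^2\slap\qA_{\qP}$ into the $Q$-curl $\rho$ and the current. This demands the precise warped-product commutation identities on Schwarzschild and a careful application of the constraint \eqref{cock3} to trade $\qA_{\qP}$ for $\rho$ and $\qj$, and it is here that pinning down the exact coefficients $\tfrac{\mu}{r^2}$, $-2$ and $2r^2$ — rather than merely the schematic shape of the equation — is delicate.
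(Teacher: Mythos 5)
Your treatment of $\omegab$ (commuting the unit-sphere Laplacian $r^2\slap$ through $\qbox+\slap$ in the odd potential equation) is correct and matches the paper, and your outline for $\omega$ is in the right spirit. The genuine gap is in your route to the $\rho$ equation. You propose to apply $\qhd\qexd$ directly to \eqref{cock1} and assert that the coupling term $\tfrac{2}{r}\qn r\,\sdiv\sA$ ``is a $Q$-gradient and survives only through its radial weight.'' It is not a $Q$-gradient: it is the $Q$-one-form $\tfrac{2}{r}(\sdiv\sA)\,\qexd r$ with non-constant coefficient, so
\begin{align*}
\qhd\qexd\Big(\tfrac{2}{r}(\sdiv\sA)\,\qexd r\Big)=\tfrac{2}{r}\,\qhd\big(\qexd(\sdiv\sA)\wedge\qexd r\big)\neq 0,
\end{align*}
and since $\sdiv\sA=\pm\slap\even{\sA}=\pm r^{-3}\omega$ this is a nontrivial coupling of the curled equation to $\omega$. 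Nothing else in the curl of \eqref{cock1} can cancel it, because every remaining term on the left involves $\qA$ alone, yet the target equation for $\rho$ contains no $\omega$; so your claim in the closing paragraph that ``the $\rho$ equation is essentially clean'' because ``the even coupling decouples'' is false. A second symptom of the same problem: the curl of the right-hand side of \eqref{cock1} is $-\qhd\qexd\qj$, whereas the proposition asserts $-\qhd\qexd\big(r^2\qj\big)$, and the discrepancy $-2r\,\qhd(\qexd r\wedge\qj)$ cannot be produced by \eqref{cock1} in isolation.

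The missing idea is the paper's intermediate step. One sets $\qxi=\qA-\qn\even{\sA}$, so that $\qexd\qxi=\qexd\qA$ and hence $\rho=r^2\qhd\qexd\qxi$, and combines \eqref{cock1}--\eqref{cock3} --- this is where the gauge terms $\qn L$ cancel and the $\even{\sA}$-coupling is absorbed into $\qxi$ --- to obtain the \emph{first-order} equation \eqref{eqnform},
\begin{align*}
-\frac{1}{r^2}\qdiv\Big(r^2\qexd\qxi\Big)+\slap\qxi=-\qj,
\end{align*}
which is nothing but the $Q$-projection of the sourced Maxwell equation for the field strength. Multiplying by $r^2$ and taking $\qhd\qexd$ then yields $\qbox\rho+\slap\rho=-\qhd\qexd\big(r^2\qj\big)$ cleanly, via the two-dimensional Hodge identity relating $\qhd\qexd\qhd\qexd$ to $\qbox$ and the fact that $r^2\slap$ commutes with $\qhd\qexd$. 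The same identity \eqref{eqnform}, read off in its $\qP$-component, is also what converts the even-sector source $-2r\slap\qxi_{\qP}$ into $\tfrac{\mu}{r^2}\omega-2\qhd\qn\rho+2r^2\qj$ --- the step you attribute only vaguely to ``the definition of $\rho$ together with \eqref{cock1} and \eqref{cock3}.'' Without first passing to this first-order, gauge-invariant form of the projected Maxwell system, neither your $\rho$ derivation nor your even-sector bookkeeping closes.
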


\begin{proof}
	The equation on $\omegab$ follows from \eqref{cock2}. To derive the equation for $\rho$ we first introduce the quantity $\qxi=\qA-\qn\even{\sA}$ and then derive from \eqref{cock1}-\eqref{cock3} the equation
	\begin{align}\label{eqnform}
	-\frac{1}{r^2}\qdiv\Big(r^2\qexd\qxi\Big)+\slap\qxi=-\qj
	\end{align}
	which yields the desired equation on $\rho=r^2\qhd\qexd\qA= r^2\qhd\qexd\qxi$.
	
	For the last equation we derive from \eqref{cock2}
	\begin{align*}
	\qbox(r^{-1}\omega)+\slap(r^{-1}\omega)=\frac{2}{r}\qn_{\qP}(r^{-1}\omega)-2r\slap\qxi_{\qp}
	\end{align*}
	and then use \eqref{eqnform}.
\end{proof}
Now as promised we achieve the scalarisation of $\gamma$ in that one can express angular derivatives of $\gamma$ in terms of derivatives of the scalars $\rho, \omega$ and $\omegab$ of the above proposition and the invariant quantities $\Phi$ and $\Psi$ which determine $\qtau, \qeta$ and $\sigma$ (cf. Proposition \ref{propRWZ}). Estimates on this scalar hierarchy then translate easily to estimates on $\gamma$ after appealing to elliptic estimates on spheres.

\begin{proposition}\label{propunravellingthehierarchy}
	Let $\gamma$ be as in Proposition \ref{propFIeqnandeqnlingrav} and define the quantity
	\begin{align*}
	\qZ:=-2r^2\qj+\qhd\qn\rho+\qn\Big(r^{-1}\omega\Big).
	\end{align*}
	Then the following relations hold:
	\begin{align*}
	r^2\slap\qhatgamma&=r^2\slap\widehat{\qtau}+\qn\qastrosunhat\qZ,\\
r^2\slap \qtrgamma&=2\qdiv\qZ,\\
	r^2\sdso\sdo{\mgamma}&=\sdso\Big(\qZ, r^2\slap\qeta\Big)+r^2\sdso\Big(\qn\big(r^{-3}\omega\big), \qn\big(r^{-2}\omegab\big)\Big),\\
	r^2\sdst\sdt\shatgamma&=\sdst\sdso\Big(r^{-1}\omega, \omegab\Big),\\	r^2\slap\strgamma&=r^2\slap\sigma+\frac{4}{r}\qZ_P+\frac{2}{r}\slap\omega.
	\end{align*}
\end{proposition}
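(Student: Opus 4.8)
The plan is to invert, on the $l\geq 2$ spherical harmonics, the algebraic relations that define the derived quantities $\qtau, \qeta, \sigma$ and the scalars $\rho, \omega, \omegab$ in terms of the projected components of $\gamma$, and then read off the five claimed identities as the corresponding rearrangements. The key structural observation is that each identity equates a \emph{second-order angular operator} applied to a component of $\gamma$ (namely $r^2\slap$, $r^2\sdso\sdo$, or $r^2\sdst\sdt$) with a combination of the source-level scalars. This is deliberate: on the $l\geq 2$ harmonics the operators $\slap$, $\sdo$, $\sdt$ (and their adjoints $\sdso, \sdst$) are invertible, so applying such a second-order operator is precisely the price one pays to express $\gamma$ \emph{locally} (i.e. without nonlocal inverse operators) in terms of the hierarchy. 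The identities should therefore be proved by direct substitution rather than by inverting anything.

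The steps I would carry out, in order, are as follows. First, recall the defining relations from the statement preceding Proposition \ref{mainprop}: $\qtau=\qhatgamma+\tfrac12\qg\qtrgamma-\qn\astrosun\qzeta$, $\qeta=\odd{\mgamma}-r^2\qexd(r^{-2}\odd{\shatgamma})$, $\sigma=\strgamma-2\slap\even{\shatgamma}-\tfrac4r\qzeta_{\qP}$, together with $\qzeta=\even{\mgamma}-r^2\qn(r^{-2}\even{\shatgamma})$. Second, recall from Proposition \ref{propFIeqnandeqnlingrav} the definitions $\rho=r^2\qhd\qexd\qA$, $\omega=r^3\slap\even{\sA}$, $\omegab=r^2\slap\odd{\sA}$, and the expression $A=\qzeta+\sdso(\even{\shatgamma},\odd{\shatgamma})$ from Proposition \ref{mainprop}; the auxiliary quantity $\qZ$ is then, by its definition, a specific second-order expression in $\rho$, $\omega$ and the current $\qj$. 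Third, for the trace-free $\qm$-tensor identity I would take $r^2\slap$ of the defining relation for $\qtau$, use that $\slap$ commutes (up to curvature terms absorbed into the hierarchy) with $\qn\astrosunhat$ on the relevant harmonics, and identify the resulting $\qn\astrosunhat$-term with $\qn\astrosunhat\qZ$ by unwinding the definition of $\qZ$. Fourth, the scalar identities for $r^2\slap\qtrgamma$ and $r^2\slap\strgamma$ follow by applying $\qdiv$, respectively $\slap$, to the same relations and using the constraint $\qtrgamma+\strgamma=0$ together with the $\sigma$-definition. Fifth, the two mixed Hodge identities ($r^2\sdso\sdo\mgamma$ and $r^2\sdst\sdt\shatgamma$) follow by applying the composite angular operators $\sdso\sdo$ and $\sdst\sdt$ to the Hodge decompositions of $\mgamma$ and $\shatgamma$, then substituting the definitions of $\qeta$, $\omega$ and $\omegab$.

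The main obstacle I expect is bookkeeping rather than conceptual: correctly tracking the commutators between the angular operators ($\slap$, $\sdo$, $\sdt$, $\sdso$, $\sdst$) and the radial derivatives $\qn$, $\qexd$, $\qhd$, since these do not commute on the sphere bundle and each swap generates curvature terms proportional to $r^{-2}$ that must land inside the designated hierarchy quantities $\qZ$, $\omega$, $\omegab$ rather than producing stray remainders. The delicate point is the $\qZ$-term in the first and third identities: one must verify that the combination $-2r^2\qj+\qhd\qn\rho+\qn(r^{-1}\omega)$ is \emph{exactly} what emerges when $r^2\slap$ acts on the $\qn\astrosun\qzeta$ part of $\qtau$, which requires using equation \eqref{eqnform} for $\qxi$ together with the relation between $\qzeta$, $\even{\sA}$ and $A$. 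Once that matching is confirmed, the remaining four identities reduce to straightforward, if lengthy, substitutions, and the proof concludes by noting that all manipulations are justified by the invertibility of the angular operators on the $l\geq 2$ harmonics.
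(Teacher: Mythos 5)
Your proposal is correct and follows essentially the same route as the paper's own proof: apply the angular operators directly to the defining relations of $\qtau$, $\qeta$, $\sigma$ (yielding the five identities with $r^2\slap\qzeta$ in place of $\qZ$), and then verify $\qZ=r^2\slap\qzeta$ using \eqref{eqnform} together with $r^2\slap\qzeta=r^2\slap\qxi+\qn\big(r^{-1}\omega\big)$ --- precisely the delicate matching you single out, and here no stray curvature terms arise since $r^2\slap$ commutes exactly with $\qn$ in this formalism. One small correction: the identity $r^2\slap\qtrgamma=2\qdiv\qZ$ is obtained by taking the \emph{trace} of the $\qtau$-relation and invoking the gauge-invariant identity $\qtr\qtau=0$ (from (57)--(59) of \cite{Jlinstabschwarz}, already used in proving Proposition \ref{mainprop}); the constraint $\qtrgamma+\strgamma=0$ together with the $\sigma$-definition alone, as you describe in your fourth step, does not produce it.
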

\begin{proof}
	Applying the relevant angular operators to the derived quantities we find
	\begin{align*}
	r^2\slap\qhatgamma&=r^2\slap\widehat{\qtau}+\qn\qastrosunhat\Big( r^2\slap\qzeta\Big),\\
r^2\slap\qtrgamma&=2\qdiv \Big(r^2\slap\qzeta\Big),\\
r^2\sdso\sdo{\mgamma}&=\sdso\Big( r^2\slap\qzeta, r^2\slap\qeta\Big)+r^2\sdso\Big(\qexd\big(r^{-3}\omega\big), \qexd\big(r^{-2}\omegab\big)\Big),\\
r^2\sdst\sdt\shatgamma&=\sdst\sdso\Big(r^{-1}\omega, \omegab\Big),\\
r^2\slap\strgamma&=r^2\slap\sigma+4r\slap\qzeta_{\qP}+\frac{2}{r}\slap\omega.
	\end{align*}
The proposition then follows after noting that $\qZ= r^2\slap\qzeta$ which in turn follows from \eqref{eqnform} and the fact that $r^2\slap\qzeta=r^2\slap\qxi+\qn(r^{-1}\omega)$.
	
\end{proof}

\end{document}